\documentclass[aps,prl,twocolumn,10pt]{revtex4-2}

\usepackage{mathrsfs}
\usepackage{bbm}
\usepackage{amsmath}
\usepackage{amssymb}
\usepackage{amsthm}
\usepackage{graphicx}
\usepackage{MnSymbol}
\usepackage{mathtools}
\usepackage[italicdiff]{physics}

\makeatletter
\usepackage{hyperref}
\usepackage{color}
\definecolor{supcol}{RGB}{10,50,180}
\definecolor{eqcol}{RGB}{220,10,100}
\hypersetup{
	colorlinks,
	citecolor=supcol,
	linkcolor=eqcol,
	urlcolor=supcol
}

\allowdisplaybreaks

\newtheorem{theorem}{Theorem}

\newtheorem{definition}[theorem]{Definition}

\newtheorem{result}[theorem]{Result}

\newcommand{\mbb}{\mathbb}

\newcommand{\msf}{\mathsf}

\newcommand{\sectionprl}[1]{{\em #1}\/.---}

\newcommand{\vpt}{\vb*{p}_t}
\newcommand{\vecp}{\vb*{p}}

\newcommand{\vecq}{\vb*{q}}
\newcommand{\vecr}{\vb*{r}}
\newcommand{\vecx}{\vb*{x}}
\newcommand{\vecv}{\vb*{v}}
\newcommand{\vpi}{\vb*{\pi}}

\begin{document}
\title{Thermomajorization Mpemba Effect}

\author{Tan Van Vu}
\email{tan.vu@yukawa.kyoto-u.ac.jp}
\affiliation{Center for Gravitational Physics and Quantum Information, Yukawa Institute for Theoretical Physics, Kyoto University, Kitashirakawa Oiwakecho, Sakyo-ku, Kyoto 606-8502, Japan}

\author{Hisao Hayakawa}
\affiliation{Center for Gravitational Physics and Quantum Information, Yukawa Institute for Theoretical Physics, Kyoto University, Kitashirakawa Oiwakecho, Sakyo-ku, Kyoto 606-8502, Japan}

\date{\today}

\begin{abstract}
The Mpemba effect is a counterintuitive physical phenomenon where a hot system cools faster than a warm one. In recent years, theoretical analyses of the Mpemba effect have been developed for microscopic systems and experimentally verified. However, the conventional theory relies on a specific choice of distance measure to quantify relaxation speed, leading to several theoretical ambiguities. In this Letter, we derive a rigorous quantification of the Mpemba effect based on thermomajorization theory, referred to as the thermomajorization Mpemba effect. This approach resolves all existing ambiguities and provides a unification of the conventional Mpemba effect across all monotone measures. Furthermore, we demonstrate the generality of the thermomajorization Mpemba effect for Markovian dynamics, rigorously proving that it can occur in any temperature regime with fixed energy levels.
\end{abstract}

\pacs{}
\maketitle

\sectionprl{Introduction}Thermal relaxation processes, where an out-of-equilibrium system relaxes toward a thermal state, can exhibit counterintuitive phenomena. 
A representative example is the Mpemba effect \cite{Mpemba.1969.PE}, where a system initially in a hot state cools faster than one starting in a warm state. 
This surprising behavior was first observed in macroscopic systems, such as water \cite{Jeng.2006.AJP} and other substances \cite{Greaney.2011.MMT,Ahn.2016.KJCE}, and has been numerically examined in diverse dynamics \cite{Lasanta.2017.PRL,BaityJesi.2019.PNAS,Torrente.2019.PRE,Gijon.2019.PRE,Biswas.2020.PRE,Santos.2020.PF,Yang.2020.PRE,Takada.2021.PRE,Patron.2021.PRE,Megias.2022.PRE,Chatterjee.2024.PRE}. 
In recent years, the theory of the Mpemba effect has been extensively studied for microscopic systems \cite{Lu.2017.PNAS,Nava.2019.PRB,Klich.2019.PRX,Gal.2020.PRL,Walker.2021.JSM,Carollo.2021.PRL,Busiello.2021.NJP,Degnther.2022.EPL,Kochsiek.2022.PRA,Schwarzendahl.2022.PRL,Teza.2023.PRL,Chatterjee.2023.PRL,Ivander.2023.PRE,Walker.2023.arxiv,Chatterjee.2024.PRA,Wang.2024.PRR,Moroder.2024.PRL,Pemartin.2024.PRL,Nava.2024.PRL,Strachan.2024.arxiv}, leading to investigations into its variants, including the inverse Mpemba effect \cite{Lu.2017.PNAS,Lasanta.2017.PRL,Shapira.2024.PRL}, cooling-heating asymmetry \cite{Lapolla.2020.PRL,Vu.2021.PRR,Manikandan.2021.PRR,Meibohm.2021.PRE,Ibez.2024.NP}, and athermal scenarios \cite{Ares.2023.NC,Murciano.2024.JSM,Rylands.2024.PRL,Liu.2024.PRL,Yamashika.2024.PRB,Chang.2024.arxiv}. 
On the experimental front, the Mpemba effect has been confirmed across various platforms, from classical to quantum systems \cite{Kumar.2020.N,Kumar.2022.PNAS,Joshi.2024.PRL}.

In the conventional theory of the Mpemba effect in Markovian dynamics \cite{Lu.2017.PNAS,Klich.2019.PRX}, a specific distance measure is used to quantify relaxation speed and determine which process is faster. 
These distance measures must satisfy certain conditions, such as monotonicity and convexity \cite{Lu.2017.PNAS}. 
However, even with these conditions, there are infinitely many possible measures, and no logical reason to favor one over another. 
Importantly, the Mpemba effect---a physical phenomenon---should not depend on the specific choice of distance measure \cite{Biswas.2023.PRE}. 
This naturally suggests that \emph{all} monotone measures should be considered simultaneously to rigorously quantify relaxation speed and establish a true criterion for the Mpemba effect's occurrence.
A critical issue, however, is that the time to observe the Mpemba effect under certain measures can become arbitrarily large. Therefore, it is essential to determine whether the effect can occur for all monotone measures within a \emph{finite bounded} time---a challenge that is nontrivial due to the infinite number of such measures.

\begin{figure}[b]
\centering
\includegraphics[width=1\linewidth]{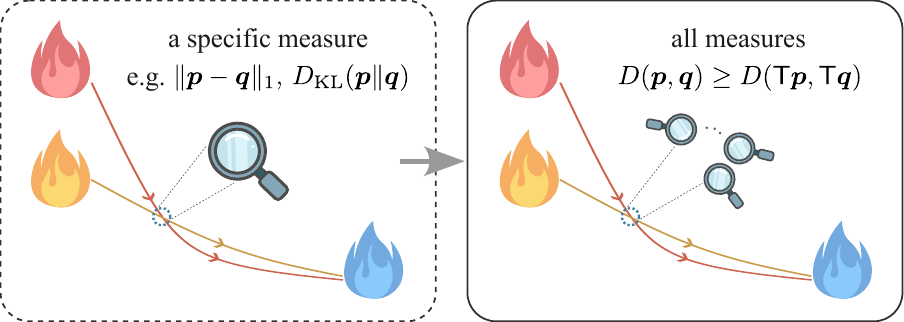}
\protect\caption{The left panel illustrates the conventional theory of the Mpemba effect, where a specific monotone measure, such as the total variation distance and the Kullback-Leibler divergence, is used to quantify the relaxation speed. The right panel demonstrates the proposed thermomajorization Mpemba effect, where all monotone measures are simultaneously exploited to evaluate the relaxation trend.}\label{fig:Cover}
\end{figure}

In this Letter, we address and fully resolve the aforementioned problem. 
To achieve this, we utilize thermomajorization theory \cite{Ruch.1978.JCP,Horodecki.2013.NC}, a mathematical framework widely applied in statistics \cite{Marshall.2011}, quantum information theory \cite{Streltsov.2017.RMP,Chitambar.2019.RMP,Bartosik.2024.RMP}, and quantum thermodynamics \cite{Brando.2015.PNAS,Lostaglio.2015.NC}. 
By considering all monotone measures to evaluate relaxation speed, we derive a rigorous quantification for the occurrence of the Mpemba effect within a finite time, which we refer to as the thermomajorization Mpemba effect (Fig.~\ref{fig:Cover}). 
This yields a unification for the Mpemba effect, as the thermomajorization Mpemba effect implies the occurrence of the Mpemba effect within a bounded time for any monotone measure, and vice versa.
We also establish a criterion for Markovian dynamics to exhibit the thermomajorization Mpemba effect in the long-time regime, refining the conventional criterion that applies to specific measures. 
Although the notion of the thermomajorization Mpemba effect is stricter than the conventional one, we demonstrate its generality by proving that it can occur at arbitrary temperatures with fixed energy levels.
Our results provide a deeper understanding of the Mpemba effect, particularly through the unified quantification and the generality of its occurrence.

\sectionprl{Setup}We consider the thermal relaxation process of an open system with $d\,(\ge 3)$ internal states.
The system is coupled to a thermal reservoir at the inverse temperature $\beta=(k_BT)^{-1}$, where the Boltzmann constant $k_B$ is set to unity for simplicity.
The interactions with the thermal reservoir induce stochastic transitions between states, and the dynamics of the system is governed by the master equation,
\begin{equation}
\dot{\vecp}_t=\msf{W}\vpt,
\end{equation}
where $\vpt=[p_1(t),\dots,p_d(t)]^\top$ represents the probability distribution of the system at time $t$, and the matrix $\msf{W}=[w_{mn}]\in\mbb{R}^{d\times d}$ is time-independent and irreducible, with $w_{mn}\ge 0$ denoting the transition rate from state $n$ to state $m\,(\neq n)$, and $\sum_{m=1}^dw_{mn}=0$.
Without loss of generality, we assume that the energy levels of states satisfy $0\le \epsilon_1\le\dots\le \epsilon_d$, where $\epsilon_n$ denotes the energy level of state $n$.
The transition rates fulfill the detailed balance condition, $w_{mn}e^{-\beta \epsilon_n}=w_{nm}e^{-\beta \epsilon_m}$, which ensures that the system always relaxes toward the Gibbs thermal state $\vpi$, regardless of the initial state.
For $m\neq n$, the transition rate $w_{mn}$ can be expressed in a general form of
\begin{equation}
w_{mn}=e^{-\beta(b_{mn}-\epsilon_n)},
\end{equation}
where $b_{mn}=b_{nm}$ are the barrier coefficients.
In the conventional setup of the Mpemba effect, the relaxation of two different thermal states  is typically considered: a hot state $\vpi^h$ at temperature $T_h$ and a warm state $\vpi^w$ at temperature $T_w$, with $T<T_w<T_h$ \cite{fnt1}.
For notational convenience, we use the superscripts $h$ and $w$ to denote the corresponding quantities when the system is initialized in the hot and warm thermal states, respectively.

Let us explain the conventional theory of the Mpemba effect and its ambiguities.
The focus is on the relaxation speed of the hot state and the warm state toward equilibrium.
To examine this, two relevant time regimes are typically considered in the literature: the intermediate- and long-time regions.

In the intermediate-time regime, a distance measure $D$ is commonly used to quantify how close the current state is to the final thermal state $\vpi$ \cite{Lu.2017.PNAS,Klich.2019.PRX}. 
Initially, it is required that $D(\vpi^h,\vpi)>D(\vpi^w,\vpi)$ should be satisfied, meaning the hot state is farther from the final state than the warm state.
The Mpemba effect is said to occur at time $t$ if a crossover between the relaxation processes happens, that is,
\begin{equation}
	D(\vpt^h,\vpi)<D(\vpt^w,\vpi).
\end{equation}
While this quantification is intuitive and reasonable to some extent, an ambiguity remains in the choice of the distance measure $D$, as it is not unique.
Commonly used monotone measures include the total variation distance and the Kullback-Leibler (KL) divergence. 
However, the physical phenomenon should ideally be independent of the specific choice of measure. 
Notably, the crossover time for some measures may be unbounded, i.e., $\sup_D t_D = \infty$, where $t_D$ denotes the crossover time for a given measure $D$.

In the long-time regime, another way to quantify the relaxation speed is by examining the slowest decay mode.
Let $0=\lambda_1>\lambda_2\ge\dots\ge\lambda_d$ represent the eigenvalues of the transition matrix and $\{\vecr_n\}$ denote the corresponding right eigenvectors, i.e., $\msf{W}\vecr_n=\lambda_n\vecr_n$.
Both the eigenvalues and eigenvectors are real numbers because the matrix $\msf{W}$ satisfies the detailed balance condition \cite{Schnakenberg.1976.RMP}.
Since $\{\vecr_n\}$ form a basis for the space $\mbb{R}^d$, the initial distribution $\vpi^s$ can be expressed as a linear combination of $\{\vecr_n\}$ as $\vpi^s=\vpi+\sum_{n=2}^{d}a_n^s\vecr_n$, where $\{a_n^s\}$ are real numbers and $s\in\{h,w\}$.
Consequently, the probability distribution at time $t$ can be analytically calculated as
\begin{equation}
\vpt^s=\vpi+\sum_{n=2}^{d}a_n^se^{\lambda_n t}\vecr_n.\label{eq:time.eigvec.decomp}
\end{equation}
In the nondegenerate case (i.e., $\lambda_2>\lambda_3$), the probability distribution $\vpt^s$ can be approximated in the long-time regime by the slowest decay mode as $\vpt^s=\vpi+a_2^se^{\lambda_2 t}\vecr_2+O(e^{\lambda_3t})$.
Thus, the relaxation speed may be quantified by the value of $|a_2^s|$ \cite{Lu.2017.PNAS}, which can be obtained analytically in closed form \cite{Klich.2019.PRX}.
Conventionally, the Mpemba effect is said to occur if $|a_2^h|<|a_2^w|$.
However, this condition does not ensure that $\sup_Dt_D<\infty$. 
In other words, there may exist monotone measures where the Mpemba effect fails to manifest within a \emph{finite} time under this condition.
Additionally, this approach breaks down in the presence of degeneracy (i.e., when $\lambda_2=\lambda_3$).
In this Letter, we develop a rigorous framework for the Mpemba effect that addresses these ambiguities and provides a unification for the conventional approach.

\sectionprl{Main results}As mentioned earlier, the existence of the Mpemba effect should not depend on the specific choice of a distance measure. 
Instead, all monotone measures must be considered to determine whether the relaxation trend changes. 
Given the infinite number of monotone measures, this task might seem infeasible. 
Nevertheless, we can fully resolve this issue using the thermomajorization theory.

Before explaining our main results, we first briefly describe the concept of thermomajorization \cite{Horodecki.2013.NC}, a generalization of classical majorization theory \cite{Marshall.2011,Sagawa.2022}. 
Roughly speaking, thermomajorization compares the spread of probability distributions relative to a reference Gibbs distribution.
We introduce thermomajorization using the notion of the generalized Lorenz curve.
For any pair of probability distributions $(\vecp,\vecq)$, let $\tilde{\vecp}$ and $\tilde{\vecq}$ be their respective permutations, where their components are rearranged in the same way such that $\tilde{p}_1/\tilde{q}_1\ge \tilde{p}_2/\tilde{q}_2\ge\dots\ge \tilde{p}_d/\tilde{q}_d$ holds.
The Lorenz curve of this pair is defined as a concave polyline that connects the origin $(0,0)$ and the points $(\sum_{k=1}^n\tilde{q}_k,\sum_{k=1}^n\tilde{p}_k)$ for $n=1,\dots,d$.
We say that $\vecp$ thermomajorizes $\vecp'$ with respect to $\vpi$, denoted by $\vecp'\prec_{\vpi}\vecp$, if the Lorenz curve of $(\vecp,\vpi)$ lies above that of $(\vecp',\vpi)$ (see Fig.~\ref{fig:Majorization} for illustration).
Since $\vpi\prec_{\vpi}\vecp$ holds for any distribution $\vecp$, the thermomajorization order $\prec_{\vpi}$ can be viewed as a measure of how ``close'' the distribution $\vecp$ is to $\vpi$ \cite{Sagawa.2022}.
A remarkable property of thermomajorization is that $\vecp'\prec_{\vpi}\vecp$ is equivalent to any of the following conditions \cite{Blackwell.1953.AMS}:
\begin{enumerate}
	\item[(a)] There exists a stochastic matrix $\msf{T}$ such that $\msf{T}\vecp=\vecp'$ and $\msf{T}\vpi=\vpi$.
	\item[(b)] $\sum_{n=1}^d\pi_nf(p_n'/\pi_n)\le\sum_{n=1}^d\pi_nf(p_n/\pi_n)$ for any continuous, convex function $f(x)$.
	\item[(c)] $\|\vecp'-z\vpi\|_1\le\|\vecp-z\vpi\|_1~\forall z\in\mbb{R}$, where $\|\vecx\|_1\coloneqq\sum_{n}|x_n|$.
\end{enumerate}
Here, the stochastic matrix $\msf{T}$ represents a Markov chain with nonnegative elements and satisfies  $\vb*{1}^\top\msf{T}=\vb*{1}^\top$, where $\vb*{1}$ is the all-one vector.
In addition to the graphical representation of the Lorenz curves, thermomajorization can also be efficiently verified by checking condition (c) for $z\in\{p_1/\pi_1,\dots,p_d/\pi_d\}$ \cite{vomEnde.2022.LAA}.
With this foundational theory, we now present our main results.

\begin{figure}[t]
\centering
\includegraphics[width=1\linewidth]{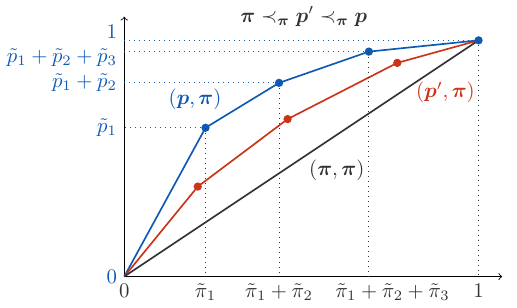}
\protect\caption{Graphical illustration of thermomajorization. The Lorenz curves of the pairs $(\vecp,\vpi)$, $(\vecp',\vpi)$, and $(\vpi,\vpi)$ are geometrically ordered, implying $\vpi\prec_{\vpi}\vecp'\prec_{\vpi}\vecp$.}\label{fig:Majorization}
\end{figure}

We begin by introducing the notion of the thermomajorization Mpemba effect, which can be regarded as a unification of the conventional Mpemba effect based on specific distance measures. 
It is important to note that $\vpi^w\prec_{\vpi}\vpi^h$ (see Sec.~S1 in Ref.~\cite{Supp.PhysRev} for the proof), meaning the warm state is closer to the final thermal state than the hot state from the perspective of thermomajorization.
The essence of the Mpemba effect---namely, the change in the relaxation trend---leads us to the following definition of the thermomajorization Mpemba effect.
\begin{definition}\label{def:TME}
The thermomajorization Mpemba effect is said to occur at time $t$ if $\vecp_t^w$ thermomajorizes $\vecp_t^h$ with respect to $\vpi$, i.e., the following relation holds true:
\begin{equation}\label{eq:thermo.Mpemba.eff}
	\vecp_t^h\prec_{\vpi}\vecp_t^w.
\end{equation}
\end{definition}
Relation \eqref{eq:thermo.Mpemba.eff} indicates that, at time $t$, the relaxation from the hot state becomes faster than that from the warm state from the viewpoint of thermomajorization. 
Therefore, the thermomajorization Mpemba effect shares the same essence as the conventional Mpemba effect: both exhibit a crossover in the relaxation processes of the hot and warm states.
The key difference lies in how the ``distance'' between the current state and the final thermal state is quantified. 
In our definition, the distance is measured by thermomajorization, which is independent of specific distance measures, whereas the conventional definition relies on a particular measure. 
Notably, the thermomajorization Mpemba effect implies the conventional one based on all $f$-divergences, according to condition (b).
They include the total variation distance and the KL divergence, with $f(x)=|x-1|$ and $f(x)=x\ln x$, respectively.
Most importantly, the thermomajorization Mpemba effect can be regarded as a unification of conventional quantifications, as demonstrated in the following result.
The proof is presented in Sec.~S2 of Ref.~\cite{Supp.PhysRev}.
\begin{result}\label{res:equiv.TME}
The thermomajorization Mpemba effect is equivalent to the Mpemba effect occurring within a finite bounded time for all monotone measures.
\end{result}
This result indicates that the thermomajorization Mpemba effect guarantees bounded crossover times for all monotone measures. 
Unlike the conventional approach, which depends on a specific measure, our framework incorporates \emph{all} monotone measures to rigorously determine relaxation speed. 
Relying on a single measure is insufficient to capture changes in the relaxation trend, as other measures may fail to exhibit a crossover within a bounded time.
Our new definition therefore offers a rigorous quantification and unification of the Mpemba effect.
Notably, the theory developed thus far (Definition \ref{def:TME} and Result \ref{res:equiv.TME}) is general and dynamics-independent \cite{fnt2}, allowing it to be applied to other phenomena, such as the inverse Mpemba effect, as well as to different dynamics, including active systems \cite{Datta.2022.PRX,Biswas.2024.arxiv}. 
Additionally, it can be extended to classical continuous-variable and quantum systems, as the extension of thermomajorization theory to these cases is straightforward \cite{Sagawa.2022,Supp.PhysRev}.

\begin{figure*}[t]
\centering
\includegraphics[width=1\linewidth]{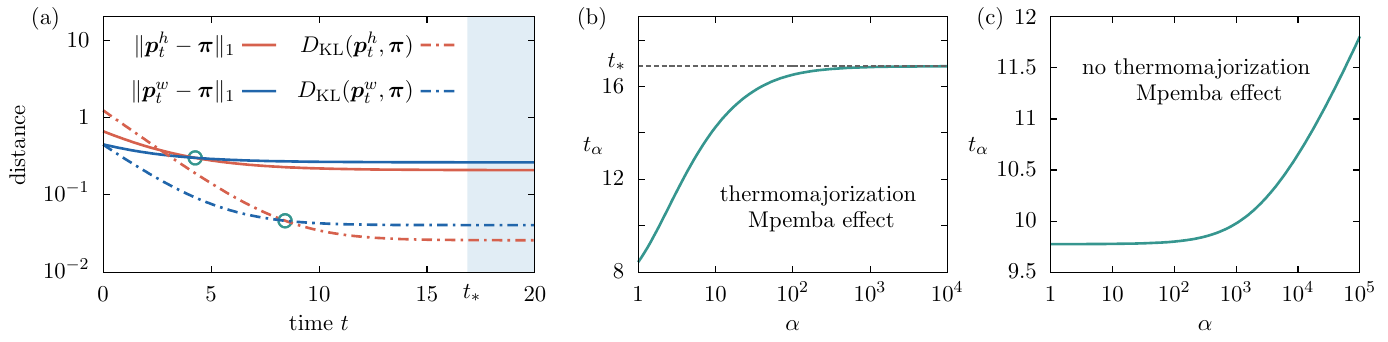}
\protect\caption{Numerical illustration of the thermomajorization Mpemba effect in the three-level system. (a) The crossover occurs at times $t_{\rm TV}\approx 4.26$ and $t_{\rm KL}\approx 8.43$ when measured by the total variation distance and the KL divergence, respectively. In contrast, the thermomajorization Mpemba effect occurs at time $t_*\approx 16.87$. (b) The crossover time $t_\alpha$, measured by the $f_\alpha$-divergence, increases and saturates at $t_*$ as $\alpha\to\infty$. (c) The divergence of the crossover time $t_\alpha$ indicates the absence of the thermomajorization Mpemba effect, even though $|a_2^h|<|a_2^w|$.}\label{fig:ThreeLevel}
\end{figure*}

So far, we have discussed a general theory for quantifying the Mpemba effect across the entire time regime. 
It is natural to investigate the existence of the thermomajorization Mpemba effect in the \emph{long-time} regime.
We note that the results presented below are restricted to Markovian dynamics.
Consider the generic case where $0=\lambda_1>\lambda_2=\dots=\lambda_\kappa>\lambda_{\kappa+1}$, with a degeneracy degree of $\kappa-1$ for the second largest eigenvalue.
In this case, the distribution at long time $t$ can be approximated as $\vpt^s=\vpi+e^{\lambda_2t}\vecv^s+O(e^{\lambda_{\kappa+1}t})$, where $\vecv^s\coloneqq\sum_{n=2}^\kappa a_n^s\vecr_n$.
The occurrence of the thermomajorization Mpemba effect in the regime of $t\gg 1$ can be determined through $\vecv^w$ and $\vecv^h$, which encode information about the degenerate eigenvectors. 
According to condition (c), we obtain the following criterion (see Sec.~S3 of Ref.~\cite{Supp.PhysRev} for the proof).
\begin{result}\label{res:long.time.tME.cond}
The thermomajorization Mpemba effect occurs in the long-time regime if and only if the following inequalities hold:
\begin{equation}\label{eq:tME.longtime.cond}
	\|\vecv^h-z\vpi\|_1\le\|\vecv^w-z\vpi\|_1~\forall z\in\qty{\frac{v_k^w}{\pi_k},\frac{v_k^h}{\pi_k}}_{1\le k\le d},
\end{equation}
with at least one of them holding strictly.
\end{result}
Inequality \eqref{eq:tME.longtime.cond} provides a simple criterion for determining the thermomajorization Mpemba effect. 
In the nondegenerate case (i.e., $\lambda_2>\lambda_3$), this criterion can be simplified to
\begin{equation}\label{eq:nondeg.cond}
	\Big\|\frac{a_2^h}{a_2^w}\vecr_2-z\vpi\Big\|_1\le\|\vecr_2-z\vpi\|_1\le\Big\|\frac{a_2^w}{a_2^h}\vecr_2-z\vpi\Big\|_1,
\end{equation}
where $z\in\{r_{2k}/\pi_k\}_{1\le k\le d}$.
Using this criterion, we derive the following result, which demonstrates the connection between our approach and the conventional one in the long-time regime.
The proof is given in Sec.~S4 of Ref.~\cite{Supp.PhysRev}.
\begin{result}\label{res:eig.TME}
For any transition matrix $\msf{W}$ with $\lambda_2>\lambda_3$, the condition $|a_2^h|<|a_2^w|$ is necessary but not sufficient to guarantee the occurrence of the thermomajorization Mpemba effect in the long-time regime.
\end{result}
This result shows that while $|a_2^h|<|a_2^w|$ is a necessary condition, it is insufficient to achieve the thermomajorization Mpemba effect.
Technically speaking, $|a_2^h|<|a_2^w|$ does not imply inequality \eqref{eq:nondeg.cond}. 
If inequality \eqref{eq:nondeg.cond} is violated, one can always find a monotone measure (e.g., a $f$-divergence) for which the Mpemba effect is not observed within a finite time.

Lastly, we discuss the generality of the thermomajorization Mpemba effect, particularly regarding its occurrence across different temperature regimes. 
Since the definition of the thermomajorization Mpemba effect is stricter than the conventional one, a nontrivial and relevant question arises: in which temperature regimes can the Mpemba effect occur? 
From both theoretical and practical perspectives, it is important to determine whether the Mpemba effect can emerge at extremely low or high temperatures. 
We affirmatively answer this question in the following result, with the proof provided in Sec.~S5 of Ref.~\cite{Supp.PhysRev}.
\begin{result}\label{res:uni.TME}
For arbitrary temperatures $T<T_w<T_h$ and fixed energy levels $\{\epsilon_n\}$, there always exists a transition matrix that showcases the thermomajorization Mpemba effect.
\end{result}
This result demonstrates the generality of the thermomajorization Mpemba effect across the entire temperature range. 
For arbitrary fixed temperatures and energy levels, it is always possible to find an appropriate set of barrier coefficients that will lead to the thermomajorization Mpemba effect. 
This provides a valuable guideline for designing systems that can exhibit the Mpemba effect at any temperature.

\sectionprl{Numerical demonstration}We illustrate the thermomajorization Mpemba effect using a minimal three-level system \cite{Lu.2017.PNAS}.
The energy levels are $\vb*{\epsilon}=[0,0.1,0.7]^\top$, and the temperatures for the hot, warm, and final states are fixed at $T_h=1.3$, $T_w=0.42$, and $T=0.1$, respectively.
The barrier coefficients are set as $b_{12}=1.5$, $b_{13}=0.8$, and $b_{23}=1.2$.
It can be numerically confirmed that the crossover occurs at time $t_{\rm TV}\approx 4.26$ and $t_{\rm KL}\approx 8.43$ when the total variation distance and the KL divergence are used to measure the distance to the final thermal state, respectively [see Fig.~\ref{fig:ThreeLevel}(a)]. 
It is also clear that the crossover time strongly depends on the choice of measure.
Interestingly, we find that the thermomajorization Mpemba effect occurs at time $t_*\approx 16.87$ \cite{fnt3}, which is relatively late compared to the conventional approach.
This means that before time $t_*$, there always exists a monotone measure $D$ (e.g., a $f$-divergence) such that the relaxation from the hot state remains slower than that from the warm state when evaluated by measure $D$.
Conversely, after time $t_*$, the relaxation from the hot state becomes faster than that from the warm state when assessed by \emph{any} monotone measure. 

Next, we provide a detailed analysis of the underlying origin of the crossover time $t_*$.
To this end, consider a class of $f$-divergence, defined for $\alpha\ge 1$ as follows:
\begin{equation}
	f_\alpha(x)=\begin{cases}
		(x^\alpha-\alpha x+\alpha-1)/\alpha(\alpha-1) & \text{if}~\alpha>1,\\
		x\ln x-x+1 & \text{if}~\alpha=1.
	\end{cases}
\end{equation}
For each $\alpha\ge 1$, we use the $f_\alpha$-divergence to measure the distance to equilibrium.
The time $t_\alpha$ at which the crossover occurs is plotted in Fig.~\ref{fig:ThreeLevel}(b).
As shown, $t_\alpha$ converges to $t_*$ as $\alpha$ increases.
Therefore, for any time $t<t_*$, there always exists a $f_\alpha$-divergence such that the crossover has not yet occurred at time $t$.
In this case, the crossover time $t_*$ originates from the $f_\infty$-divergence, which evaluates the maximum ratio of probabilities $\max_n[p_n(t)/\pi_n]$ (see Sec.~S6 in Ref.~\cite{Supp.PhysRev} for details).

Lastly, we demonstrate a case where the thermomajorization Mpemba effect does not occur, even though $|a_2^h|<|a_2^w|$.
The barrier coefficients are appropriately adjusted so that $|a_2^h|<|a_2^w|$, but inequality \eqref{eq:nondeg.cond} is violated \cite{fnt4}, while the energy levels and temperatures remain as previously specified.
We analogously calculate the crossover time for each $f_\alpha$-divergence and plot the results in Fig.~\ref{fig:ThreeLevel}(c).
As shown, $t_\alpha$ does not converge and tends to diverge as $\alpha$ increases.
In other words, for any time $t$, one can always find a $f_\alpha$-divergence such that the Mpemba effect is not observed with this measure.

\sectionprl{Conclusion and discussion}Using thermomajorization theory, we developed a rigorous quantification and unification of the Mpemba effect. 
This unified framework resolves the ambiguities inherent in the conventional approach, which relies on specific distance measures. 
Furthermore, we demonstrated the generality of the thermomajorization Mpemba effect across arbitrary temperatures. 
Our study not only advances the understanding of the Mpemba effect but also establishes a foundational framework for exploring its variants.
While we presented our theory for discrete-state systems, it can also be straightforwardly extended to continuous-state cases (see Sec.~S7 in Ref.~\cite{Supp.PhysRev}).

In addition to quantifications based on probability distributions, some studies employed energetic observables to quantify the Mpemba effect \cite{Biswas.2023.PRE, Ohga.2024.arxiv}. 
Investigating the connection between these approaches would be an interesting direction for future research. 
Another important question is determining the minimum timescale at which the thermomajorization Mpemba effect occurs, which could potentially be addressed through the lens of speed limits \cite{Deffner.2017.JPA,Vu.2023.PRL.TSL,Srivastav.2024.arxiv}.

\begin{acknowledgments}
Fruitful discussions with Fr{\'e}d{\'e}ric van Wijland and Keiji Saito are gratefully acknowledged.
{T.V.V} thanks Naruo Ohga and Sosuke Ito for sharing their preliminary manuscript with H.~Hayakawa.
This work was supported by JSPS KAKENHI Grant No.~JP23K13032 and the Kyoto University Foundation.
\end{acknowledgments}

\end{document}


\title{Supplemental Material for ``Thermomajorization Mpemba Effect''}

\author{Tan Van Vu}
\email{tan.vu@yukawa.kyoto-u.ac.jp}
\affiliation{Center for Gravitational Physics and Quantum Information, Yukawa Institute for Theoretical Physics, Kyoto University, Kitashirakawa Oiwakecho, Sakyo-ku, Kyoto 606-8502, Japan}

\author{Hisao Hayakawa}
\affiliation{Center for Gravitational Physics and Quantum Information, Yukawa Institute for Theoretical Physics, Kyoto University, Kitashirakawa Oiwakecho, Sakyo-ku, Kyoto 606-8502, Japan}

\begin{abstract}
This Supplemental Material describes the proof of the analytical results obtained in the main text, the analysis of the origin of the crossover time $t_*$, and the extension of the thermomajorization theory to continuous-state cases. 
The equations and figure numbers are prefixed with S [e.g., Eq.~(S1) or Fig.~S1]. 
The numbers without this prefix [e.g., Eq.~(1) or Fig.~1] refer to the items in the main text.
\end{abstract}

\pacs{}
\maketitle

\tableofcontents

\section{Proof of the thermomajorization order $\vpi^w\prec_{\vpi}\vpi^h$}
Here we prove that $\vpi^h$ thermomajorizes $\vpi^w$ as long as $T<T_w<T_h$.
Let $Z$ and $Z_s$ denote the partition function of the Gibbs states $\vpi$ and $\vpi^s$, respectively.
Since $\pi_n^s/\pi_n=e^{(\beta-\beta_s)\epsilon_n}ZZ_s^{-1}$, $0\le \epsilon_1\le\dots\le \epsilon_d$, and $\beta-\beta_s\ge 0$ for $s\in\{h,w\}$, the rearranged distributions $\tilde{\vpi}^s$ and $\tilde{\vpi}$ can be easily determined as $\tilde{\vpi}^s=[\pi_d^s,\dots,\pi_1^s]^\top$ and $\tilde{\vpi}=[\pi_d,\dots,\pi_1]^\top$.
Therefore, the Lorenz curve of $(\vpi^s,\vpi)$ is given by a polyline $OP_1^sP_2^s\dots P_d^s$, where $O=(0,0)$ and
\begin{equation}
	P_n^s\coloneqq(x_n^s,y_n^s)=\qty(\sum_{k=d+1-n}^{d}\pi_{k},\sum_{k=d+1-n}^{d}\pi_k^s)~\forall n=1,\dots,d.
\end{equation}
Since $x_n^w=x_n^h$, in order to prove $\vpi^w\prec_{\vpi}\vpi^h$, it is sufficient to show that the following inequality holds true for any $n$:
\begin{equation}
	y_n^w\le y_n^h.\label{eq:mar.tmp1}
\end{equation}
To this end, we first note that $\pi_1^s\ge\pi_2^s\ge\dots\ge\pi_d^s$ and
\begin{align}
	\pi_1^h&=\frac{1}{\sum_{n=1}^de^{\beta_h(\epsilon_1-\epsilon_n)}}\le\frac{1}{\sum_{n=1}^de^{\beta_w(\epsilon_1-\epsilon_n)}}=\pi_1^w,\\
	\pi_d^h&=\frac{1}{\sum_{n=1}^de^{\beta_h(\epsilon_d-\epsilon_n)}}\ge\frac{1}{\sum_{n=1}^de^{\beta_w(\epsilon_d-\epsilon_n)}}=\pi_d^w.
\end{align}
In addition, we have
\begin{align}
	\frac{\pi_k^h}{\pi_{k+1}^h}\frac{\pi_{k+1}^w}{\pi_k^w}=e^{(\beta_h-\beta_w)(\epsilon_{k+1}-\epsilon_k)}\le 1\rightarrow\frac{\pi_k^h}{\pi_k^w}\le\frac{\pi_{k+1}^h}{\pi_{k+1}^w}.
\end{align}
Therefore, there exists an index $i$ such that
\begin{equation}
	\frac{\pi_1^h}{\pi_1^w}\le\dots\le\frac{\pi_i^h}{\pi_i^w}\le 1\le\frac{\pi_{i+1}^h}{\pi_{i+1}^w}\le\dots\le\frac{\pi_{d}^h}{\pi_{d}^w},
\end{equation}
or equivalently $\pi_k^h\ge\pi_k^w$ for any $k\ge i+1$ and $\pi_k^h\le\pi_k^w$ for any $k\le i$.
Now, inequality \eqref{eq:mar.tmp1} can be proved as follows.
For $n\le d-i$, it can be easily verified by noticing that
\begin{align}
	y_n^h-y_n^w=\sum_{k=d+1-n}^d(\pi_k^h-\pi_k^w)\ge 0.
\end{align}
For $n\ge d-i+1$, since $\sum_{k=i+1}^d(\pi_k^h-\pi_k^w)=\sum_{k=1}^i(\pi_k^w-\pi_k^h)$, we can calculate as
\begin{align}
	y_n^h-y_n^w&=\sum_{k=i+1}^d(\pi_k^h-\pi_k^w)+\sum_{k=d+1-n}^i(\pi_k^h-\pi_k^w)\notag\\
	&=\sum_{k=1}^i(\pi_k^w-\pi_k^h)+\sum_{k=d+1-n}^i(\pi_k^h-\pi_k^w)\notag\\
	&=\sum_{k=1}^{d-n}(\pi_k^w-\pi_k^h)\ge 0,
\end{align}
which validates inequality \eqref{eq:mar.tmp1}.

\section{Proof of Result \FirRes}
We first prove that the thermomajorization Mpemba effect implies the Mpemba effect within a finite bounded time for \emph{all} monotone measures.
To this end, let $D$ be an arbitrary monotone measure; that is, $D(\vecp,\vecq)\ge D(\msf{T}\vecp,\msf{T}\vecq)$ for any stochastic matrix $\msf{T}$.
Since $\vpi^w\prec_{\vpi}\vpi^h$, there exists a stochastic matrix $\msf{T}_0$ such that $\msf{T}_0\vpi^h=\vpi^w$ and $\msf{T}_0\vpi=\vpi$, according to condition (a).
Due to the monotonicity of measure $D$, we obtain
\begin{equation}
	D(\vpi^h,\vpi)\ge D(\msf{T}_0\vpi^h,\msf{T}_0\vpi)=D(\vpi^w,\vpi),
\end{equation}
which indicates that the warm state is initially closer to the final thermal state than the hot one when quantified by measure $D$.
It is thus sufficient to prove $D(\vecp_t^h,\vpi)\le D(\vecp_t^w,\vpi)$, given that the thermomajorization Mpemba effect occurs at time $t<\infty$.
Since $\vpt^h\prec_{\vpi}\vpt^w$, there exists a stochastic matrix $\msf{T}_t$ such that $\msf{T}_t\vpt^w=\vpt^h$ and $\msf{T}_t\vpi=\vpi$.
By applying the monotonicity of the measure $D$ again, we achieve
\begin{equation}
	D(\vpt^w,\vpi)\ge D(\msf{T}_t\vpt^w,\msf{T}_t\vpi)=D(\vpt^h,\vpi).
\end{equation}
This implies that there exists a crossover within the bounded time $t$ between the relaxation trends from the initial hot and warm thermal states, which is nothing but the conventional Mpemba effect based on measure $D$.

Next, it remains to prove the converse statement.
Assume that the conventional Mpemba effect occurs for all monotone measures within a finite bounded time $t$; that is, $D(\vpi^h,\vpi)>D(\vpi^w,\vpi)$ and $D(\vecp_t^h,\vpi)\le D(\vecp_t^w,\vpi)$ are simultaneously satisfied for any monotone measure $D$.
We need to show the existence of the thermomajorization Mpemba effect in this case.
Let $f(x)$ be an arbitrary continuous, convex function and define $\bar{f}(x)\coloneqq f(x)-f(1)$.
Then, $\bar{f}(1)=0$ and $\bar{f}(x)$ is also a continuous, convex function.
A monotone measure based on this convex function can be defined as
\begin{equation}
	D_{\bar{f}}(\vecp,\vecq)\coloneqq\sum_nq_n\bar{f}(p_n/q_n)\ge 0,
\end{equation}
which is also known as the $\bar{f}$-divergence.
Since the conventional Mpemba effect based on the monotone measure $D_{\bar{f}}$ occurs within time $t$, the crossover condition $D_{\bar{f}}(\vpt^h,\vpi)\le D_{\bar{f}}(\vpt^w,\vpi)$ immediately derives
\begin{equation}
	\sum_{n=1}^d\pi_nf\qty[\frac{p_n^h(t)}{\pi_n}]\le\sum_{n=1}^d\pi_nf\qty[\frac{p_n^w(t)}{\pi_n}],
\end{equation}
which holds true for arbitrary continuous, convex functions $f(x)$.
According to condition (b), this implies $\vpt^h\prec_{\vpi}\vpt^w$, which is exactly the thermomajorization Mpemba effect.

\section{Proof of Result \SecRes}
In the long-time regime, the probability distribution can be expressed as
\begin{equation}\label{eq:prob.long.time}
	\vpt^s=\vpi+e^{\lambda_2t}\vecv^s+O(e^{\lambda_{\kappa+1}t}),
\end{equation}
where $\vecv^s=\sum_{n=2}^\kappa a_2^s\vecr_n$.
According to condition (c) in the main text, the following inequality must hold true in order to achieve the thermomajorization Mpemba effect:
\begin{equation}\label{eq:tME.condc}
	\|\vpt^h-z\vpi\|_1\le\|\vpt^w-z\vpi\|_1~\forall z\in\mbb{R}.
\end{equation}
Inequality \eqref{eq:tME.condc} is sufficiently guaranteed by examining $z\in\{p_1^w(t)/\pi_1,\dots,p_d^w(t)/\pi_d,p_1^h(t)/\pi_1,\dots,p_d^h(t)/\pi_d\}$ \cite{Marshall.2011}.
Inserting these values of $z$ into inequality \eqref{eq:tME.condc} and using the expression \eqref{eq:prob.long.time}, we obtain the following criterion:
\begin{align}
	\|\vecv^h-z\vpi\|_1\le\|\vecv^w-z\vpi\|_1~\forall z\in\qty{\frac{v_k^w}{\pi_k},\frac{v_k^h}{\pi_k}}_{1\le k\le d}.\label{eq:tME.long.cond}
\end{align}
Furthermore, since the Lorenz curve of $(\vpt^w,\vpi)$ must lie strictly above that of $(\vpt^h,\vpi)$, at least one of inequalities \eqref{eq:tME.long.cond} must hold strictly.
In the nondegenerate case (i.e., $\lambda_2>\lambda_3$), these inequalities can be simplified to
\begin{align}
	\Big\|\frac{a_2^h}{a_2^w}\vecr_2-z\vpi\Big\|_1\le\|\vecr_2-z\vpi\|_1\le\Big\|\frac{a_2^w}{a_2^h}\vecr_2-z\vpi\Big\|_1~\forall z\in\qty{\frac{r_{2k}}{\pi_k}}_{1\le k\le d}.\label{eq:tME.long.cond}
\end{align}

\section{Proof of Result \ThiRes}
We prove the first statement that the thermomajorization Mpemba effect in the long-time regime derives $|a_2^h|<|a_2^w|$.
It is sufficient to prove that inequalities (\LongTimeNonDegCond) with at least one of these inequalities holding strictly,
\begin{align}
	\sum_{n=1}^d\qty|r_{2n}-\frac{\pi_n}{\pi_k}r_{2k}|&\ge\sum_{n=1}^d\qty|\frac{a_2^h}{a_2^w}r_{2n}-\frac{\pi_n}{\pi_k}r_{2k}|,\label{eq:nondeg.cond1}\\
	\sum_{n=1}^d\qty|r_{2n}-\frac{\pi_n}{\pi_k}r_{2k}|&\le\sum_{n=1}^d\qty|\frac{a_2^w}{a_2^h}r_{2n}-\frac{\pi_n}{\pi_k}r_{2k}|,\label{eq:nondeg.cond2}
\end{align}
will imply $|a_2^h|<|a_2^w|$.

To this end, we assume $|a_2^h|\ge|a_2^w|$ and show that this leads to a contradiction.
Let $i=\argmin_kr_{2k}/\pi_k$ and $j=\argmax_kr_{2k}/\pi_k$.
The conditions $\vb*{1}^\top\vecr_2=0$ and $\vecr_2\neq\vb*{0}$ readily imply that $r_{2i}<0$ and $r_{2j}>0$.
Applying the triangle inequality, we get
\begin{align}
	\sum_{n=1}^d\qty|\frac{a_2^h}{a_2^w}r_{2n}-\frac{\pi_n}{\pi_i}r_{2i}|&\ge\qty|\sum_{n=1}^d\qty(\frac{a_2^h}{a_2^w}r_{2n}-\frac{\pi_n}{\pi_i}r_{2i})|\notag\\
	&=\qty|\frac{r_{2i}}{\pi_i}|\notag\\
	&=\qty|\sum_{n=1}^d\qty(r_{2n}-\frac{\pi_n}{\pi_i}r_{2i})|\notag\\
	&=\sum_{n=1}^d\qty|r_{2n}-\frac{\pi_n}{\pi_i}r_{2i}|.\label{eq:nondeg.cond.tmp1}
\end{align}
Combining inequalities \eqref{eq:nondeg.cond1} and \eqref{eq:nondeg.cond.tmp1} results in their equality.
In other words, the sign of $(a_2^h/a_2^w)r_{2n}-(\pi_n/\pi_i)r_{2i}$ must be the same for all $n\in[1,d]$.
Similarly, we obtain
\begin{align}
	\sum_{n=1}^d\qty|\frac{a_2^h}{a_2^w}r_{2n}-\frac{\pi_n}{\pi_j}r_{2j}|&\ge\qty|\sum_{n=1}^d\qty(\frac{a_2^h}{a_2^w}r_{2n}-\frac{\pi_n}{\pi_j}r_{2j})|\notag\\
	&=\qty|\frac{r_{2j}}{\pi_j}|\notag\\
	&=\qty|\sum_{n=1}^d\qty(r_{2n}-\frac{\pi_n}{\pi_j}r_{2j})|\notag\\
	&=\sum_{n=1}^d\qty|r_{2n}-\frac{\pi_n}{\pi_j}r_{2j}|,\label{eq:nondeg.cond.tmp2}
\end{align}
which indicates that the sign of $(a_2^h/a_2^w)r_{2n}-(\pi_n/\pi_j)r_{2j}$ must also be the same for all $n\in[1,d]$.

Evidently, $|a_2^h|\neq|a_2^w|$ because if $a_2^h/a_2^w=\pm 1$, all inequalities \eqref{eq:nondeg.cond1} and \eqref{eq:nondeg.cond2} will become equalities, violating the condition that at least one of these inequalities must hold strictly.
Consider two cases: $a_2^h/a_2^w>1$ and $a_2^h/a_2^w<-1$. 
If $a_2^h/a_2^w>1$, the following terms have opposite signs:
\begin{align}
	\frac{a_2^h}{a_2^w}r_{2i}-\frac{\pi_i}{\pi_i}r_{2i}&=\qty(\frac{a_2^h}{a_2^w}-1)r_{2i}<0,\\
	\frac{a_2^h}{a_2^w}r_{2j}-\frac{\pi_j}{\pi_i}r_{2i}&>0.
\end{align}
This contradicts the sign condition derived above.
If $a_2^h/a_2^w<-1$, since $(a_2^h/a_2^w)r_{2i}-(\pi_i/\pi_i)r_{2i}>0$ and $(a_2^h/a_2^w)r_{2j}-(\pi_j/\pi_j)r_{2j}<0$, we obtain the following inequalities from the sign conditions:
\begin{align}
	\frac{a_2^h}{a_2^w}r_{2j}-\frac{\pi_j}{\pi_i}r_{2i}&\ge 0,\label{eq:nondeg.cond.tmp3}\\
	\frac{a_2^h}{a_2^w}r_{2i}-\frac{\pi_i}{\pi_j}r_{2j}&\le 0.\label{eq:nondeg.cond.tmp4}
\end{align}
Noting that $r_{2i}<0$ and $r_{2j}>0$, we further achieve
\begin{equation}
	-\frac{r_{2i}}{\pi_i}\ge\qty(-\frac{a_2^h}{a_2^w})\frac{r_{2j}}{\pi_j}\ge \qty(-\frac{a_2^h}{a_2^w})^2\qty(-\frac{r_{2i}}{\pi_i})>-\frac{r_{2i}}{\pi_i},
\end{equation}
which is a contradiction.
Therefore, $|a_2^h|<|a_2^w|$ must hold true.

Next, we prove that this simple relationship between the coefficient of the slowest decay mode does not imply the thermomajorization Mpemba effect at a long time.
To this end, it is sufficient to show an instance of a transition matrix that fulfills $|a_2^h|<|a_2^w|$ but violates inequality \eqref{eq:nondeg.cond1} for some $k$.
It is a key observation that for any vector $\vecr_2$ orthogonal to $\vb*{1}$, one can always find a set of barrier coefficients such that the resulting transition matrix is nondegenerate and takes $\vecr_2$ as the right eigenvector corresponding to the second largest eigenvalue (cf.~Lemma \ref{lem:tran.mat.exist}).
Since $\vb*{1}^\top\vecr_2=0$, we have $\vpi^\top\vecx=0$, where $\vecx\coloneqq\Pi^{-1}\vecr_2$ and $\Pi\coloneqq\diag(\pi_1,\dots,\pi_d)$ is a diagonal matrix.
Additionally, by noting that $a_2^s=(\vecx^\top\vpi^s)/(\vecx^\top\vecr_2)$ as $\vecx^\top\vecr_n=0$ for any $n\neq 2$, inequality \eqref{eq:nondeg.cond1} can be expressed in the following form:
\begin{equation}\label{eq:nondeg.tmp}
	\sum_{n=1}^d\pi_n|x_n-x_k|\ge\sum_{n=1}^d\pi_n\qty|\frac{\vecx^\top\vpi^h}{\vecx^\top\vpi^w}x_n-x_k|.
\end{equation}
Therefore, we need only show a violation of inequality \eqref{eq:nondeg.tmp} under the constraints $\vpi^\top\vecx=0$ and $|\vecx^\top\vpi^w|>|\vecx^\top\vpi^h|$.
This can be easily achieved, for instance, in a three-level system with $T_h=1.3$, $T_w=0.42$, $T=0.1$, $\vb*{\epsilon}=[0,0.1,0.7]^\top$, $x_1=0.23$, $x_3=0.73$, and $k=3$.
A detailed construction of the transition matrix that achieves the vector $\vecx$ is provided in Lemma \ref{lem:tran.mat.exist}.

\section{Proof of Result \FouRes}
We prove the result by analytically constructing a transition matrix that satisfies the detailed balance condition with respect to the fixed energy levels $\{\epsilon_n\}$ and displays the thermomajorization Mpemba effect.
To this end, let $\{\vecv_1,\vecv_2\}$ be an orthonormal basis of the space $\mca{V}$ spanned by $\vpi^h$ and $\vpi$, and $\msf{P}\coloneqq\vecv_1\vecv_1^\top+\vecv_2\vecv_2^\top$ be the projection matrix onto the space $\mca{V}$.
Note that $\msf{P}\vpi^h=\vpi^h$ and $\msf{P}\vpi=\vpi$.
We define $\Pi\coloneqq\diag(\vpi)$ and $\vecr_2\coloneqq\Pi(\vpi^w-\msf{P}\vpi^w)$, which fulfills $\vecr_2\neq\vb*{0}$ because the vectors $\vpi^w,\vpi^h$, and $\vpi$ are linearly independent \cite{Vu.2021.PRR}.
Furthermore, it can be verified that $\vb*{1}^\top\vecr_2=0$.
Consequently, one can find a configuration of barrier coefficients such that the transition matrix is nondegenerate and takes $\vecr_2$ as the eigenvector of the second largest eigenvalue (cf.~Lemma \ref{lem:tran.mat.exist}).
We need only prove that this transition matrix exhibits the thermomajorization Mpemba effect.
In other words, it is sufficient to validate inequalities (\LongTimeNonDegCond).
Note that the coefficient associated with the slowest decay mode can be calculated as
\begin{equation}
	a_2^s=\frac{\vecr_2^\top\Pi^{-1}\vpi^s}{\vecr_2^\top\Pi^{-1}\vecr_2}.
\end{equation}
We can confirm that
\begin{equation}
	a_2^h=\frac{\vecr_2^\top\Pi^{-1}\vpi^h}{\vecr_2^\top\Pi^{-1}\vecr_2}=\frac{(\vpi^w-\msf{P}\vpi^w)^\top\vpi^h}{\vecr_2^\top\Pi^{-1}\vecr_2}=0
\end{equation}
and
\begin{equation}
	a_2^w=\frac{\vecr_2^\top\Pi^{-1}\vpi^w}{\vecr_2^\top\Pi^{-1}\vecr_2}=\frac{\vecr_2^\top\Pi^{-2}\vecr_2}{\vecr_2^\top\Pi^{-1}\vecr_2}>0.
\end{equation}
Using these facts and applying the triangle inequality, inequalities (\LongTimeNonDegCond) can be proved as follows:
\begin{align}
	\sum_{n=1}^d\qty|r_{2n}-\frac{\pi_n}{\pi_k}r_{2k}|&\ge\qty|\sum_{n=1}^d\qty(r_{2n}-\frac{\pi_n}{\pi_k}r_{2k})|\notag\\
	&=\frac{|r_{2k}|}{\pi_k}\notag\\
	&=\sum_{n=1}^d\qty|\frac{a_2^h}{a_2^w}r_{2n}-\frac{\pi_n}{\pi_k}r_{2k}|,\\
	\sum_{n=1}^d\qty|r_{2n}-\frac{\pi_n}{\pi_k}r_{2k}|&<\sum_{n=1}^d\qty|\frac{a_2^w}{a_2^h}r_{2n}-\frac{\pi_n}{\pi_k}r_{2k}|=\infty.
\end{align}
This indicates that this transition matrix displays the thermomajorization Mpemba effect in the long-time regime.

\section{Detailed analysis on the origin of $t_*\approx 16.87$ in the main text}
Here we provide an in-depth analysis of the origin of $t_*\approx 16.87$ in the example discussed in the main text.
To better visualize the behavior of $f_\alpha$-divergence, we plot the function $f_\alpha(x)$ in Fig.~\ref{fig:AlphaD} for various values of $\alpha$. As can be observed, $f_2(x)$ is a symmetric function about $x=1$, whereas increasing $\alpha$ enhances its asymmetry. Particularly, for large $\alpha$, the function $f_\alpha(x)$ tends to place significant weight on values where $x>1$.

\begin{figure}[t]
\centering
\includegraphics[width=1\linewidth]{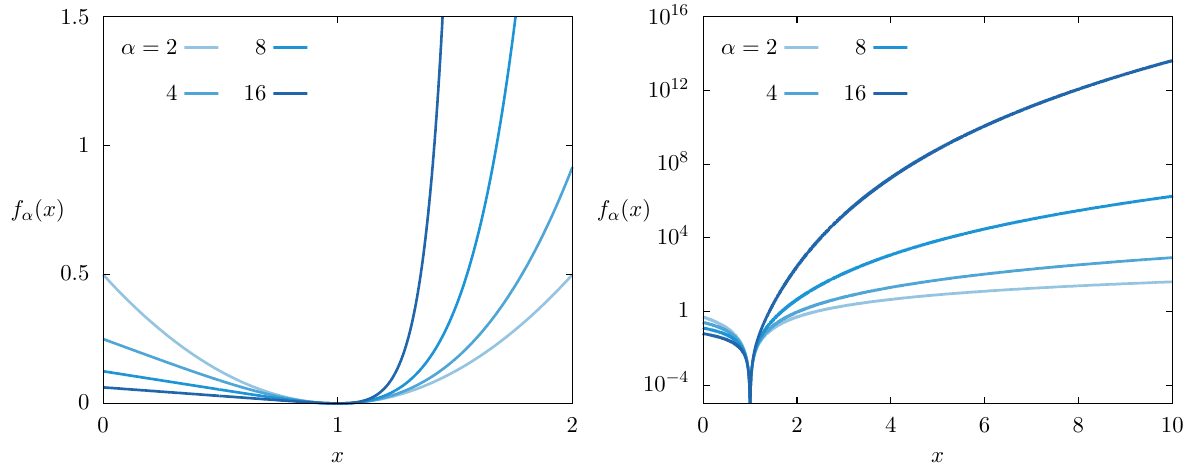}
\protect\caption{Illustration of the behavior of $f_\alpha$-divergence for various values of $\alpha\in\{2,4,8,16\}$. The left panel shows the function $f_\alpha(x)$ on a normal scale, while the right panel presents it on a logarithmic scale.}\label{fig:AlphaD}
\end{figure}

In the $\alpha\to\infty$ limit, we can show that
\begin{equation}
	\lim_{\alpha\to\infty}\frac{1}{\alpha}\ln D_{f_\alpha}(\vb*{p},\vb*{q})=\lim_{\alpha\to\infty}\frac{1}{\alpha}\ln\qty[\sum_{n=1}^dq_nf_\alpha\qty(\frac{p_n}{q_n})]=\max_{1\le n\le d}\ln\frac{p_n}{q_n}.
\end{equation}
This result implies that the $f_\alpha$-divergence between distributions $\vb*{p}$ and $\vb*{q}$ for large $\alpha$ is highly sensitive to the point where $\vb*{p}$ and $\vb*{q}$ differ the most.
In the context of the Mpemba effect, the condition $D_{f_\alpha}(\vb*{p}_{t_*}^h,\vb*{\pi})\le D_{f_\alpha}(\vb*{p}_{t_*}^w,\vb*{\pi})$ for $\alpha\to\infty$ leads to
\begin{equation}\label{eq:tmp.t.org}
	\max_{1\le n\le d}\frac{p_n^h(t_*)}{\pi_n}\le\max_{1\le n\le d}\frac{p_n^w(t_*)}{\pi_n}.
\end{equation}
The crossover time $t_*$ in this case originates from the $f_\infty$-divergence, where the maximum ratio of probabilities must satisfy inequality \eqref{eq:tmp.t.org}.
As shown, the class of $f_\alpha$-divergences not only assesses the overall distance between distributions but also effectively evaluates the largest discrepancies between them. 
This property explains why observing the Mpemba effect using all the monotone measures is both difficult and stringent.

\section{Thermomajorization theory}
\subsection{Properties of the thermomajorization order $\prec_{\vpi}$}
Before introducing the thermomajorization theory in continuous-state cases, we first discuss some fundamental properties of the thermomajorization order $\prec_{\vpi}$.
It is worth noting that $\prec_{\vpi}$ satisfies the axioms of a \emph{partial} order on the space of probability distributions. 
Specifically, it fulfills the following properties:
\begin{enumerate}
	\item Reflexivity: Any probability distribution thermomajorizes itself, $\vb*{p}\prec_{\vpi}\vb*{p}$.	
	\item Antisymmetry: If $\vb*{p}\prec_{\vpi}\vb*{p}'$  and $\vb*{p}'\prec_{\vpi}\vb*{p}$, then the Lorenz curves of $(\vecp,\vpi)$ and $(\vecp',\vpi)$ must be identical.
	
	\item Transitivity: If $\vecp'\prec_{\vpi}\vecp$ and $\vecp''\prec_{\vpi}\vecp'$, then $\vecp''\prec_{\vpi}\vecp$.
\end{enumerate}
We note that the thermomajorization order $\prec_{\vpi}$ is not a \emph{total} order, where every pair of distributions can be compared.
It is only a partial order because not every pair of distributions are comparable under thermomajorization.
That is, for two distributions $\vb*{p}$ and $\vb*{p}'$, it is possible that neither $\vb*{p}\prec_{\vpi}\vb*{p}'$ nor $\vb*{p}'\prec_{\vpi}\vb*{p}$ [i.e., the Lorenz curves of $(\vecp,\vpi)$ and $(\vecp',\vpi)$ strictly intersect without overlapping]. In such cases, the distributions are said to be incomparable under thermomajorization.
Additionally, since $\prec_{\vpi}$ is not a total order, it is clear that no single monotone measure can fully capture the relation defined by $\prec_{\vpi}$.

\subsection{Continuous-state cases}
Here we describe the thermomajorization theory for probability distributions on the continuous-state space \cite{Ruch.1978.JCP,Sagawa.2022}.
Consider continuous probability distributions supported on the domain $\mca{D}\subseteq\mbb{R}$.
For simplicity, we take $\mca{D}=[0,1]$, although the theory straightforwardly generalizes to other domains, including $\mca{D}=\mbb{R}$. 
A valid probability distribution $p:[0,1]\mapsto\mbb{R}_{\ge 0}$ satisfies $p(x)\ge 0~\forall x\in[0,1]$ and $\int_{0}^{1}\dd{x}p(x)=1$.

\subsubsection{Level function and rearrangement}
To establish the thermomajorization theory, it is convenient to define the level function $m_p:[0,+\infty)\mapsto[0,1]$,
\begin{equation}
	m_p(y)\coloneqq\mu\qty{x\in[0,1]|\,p(x)>y},
\end{equation}
where $\mu$ denotes the Lebesgue measure.
The level function $m_p(y)$ is non-increasing, right-continuous, and well-defined for all $y\in[0,+\infty)$.
Intuitively, $m_p(y)$ characterizes the size of the region where the distribution function $p(x)$ exceeds $y$.
It is noteworthy that different distributions may have the same level function, i.e., $m_p=m_q$ for some $p\neq q$.
Using this level function, we can define the rearrangement of $p(x)$ as
\begin{equation}
	p^{\downarrow}(x)\coloneqq\sup\{y\in[0,+\infty)|\,m_p(y)>x\}.
\end{equation}
It can be verified that $p^{\downarrow}$ is a non-increasing function and satisfies
\begin{align}
	\int_0^s\dd{x}p^{\downarrow}(x)&\ge\int_0^s\dd{x}p(x)~\forall s\in[0,1),\notag\\
	\int_0^1\dd{x}p^{\downarrow}(x)&=\int_0^1\dd{x}p(x)=1.
\end{align}

\subsubsection{Thermomajorization order}
With the notion of non-increasing rearrangement established, we are now ready to define the thermomajorization order.
Let $\pi$ be a reference thermal distribution $\pi(x)$, which satisfies $\pi(x)>0~\forall x\in[0,1]$.
Define the cumulative distribution function $F_\pi(x)\coloneqq\int_0^x\dd{x'}\pi(x')$, which derives $\dd{F_\pi(x)}=\pi(x)\dd{x}$, and let $F_\pi^{-1}$ be its inverse function.
For any pair of probability distributions $(p,\pi)$, we define the rescaled distribution function as
\begin{equation}
	p_\pi(x)\coloneqq\frac{p(F_\pi^{-1}(x))}{\pi(F_\pi^{-1}(x))}.
\end{equation}
By the change of variable $x\mapsto F_\pi(x)$, it follows that $\int_0^1\dd{x}p_\pi(x)=\int_0^1\dd{x}p(x)=1$ .
The Lorenz curve of the pair $(p,\pi)$ can be attained from the rearranged distribution $p_\pi^\downarrow(x)$, which represents $p(x)/\pi(x)$ in descending order.
Specifically, the Lorenz curve is given by the concave line
\begin{equation}
	\ell_{p,\pi}\coloneqq\qty{(s,\int_0^s\dd{x}p_\pi^\downarrow(x))}_{0\le s\le 1}.
\end{equation}
We say $p$ thermomajorizes $p'$ with respect to $\pi$, denoted $p'\prec_\pi p$, if the Lorenz curve of $(p,\pi)$ lies above that of $(p',\pi)$.
A geometric illustration of the continuous-state thermomajorization theory is provided in Fig.~\ref{fig:ContThermo}.
\begin{figure}[t]
\centering
\includegraphics[width=1\linewidth]{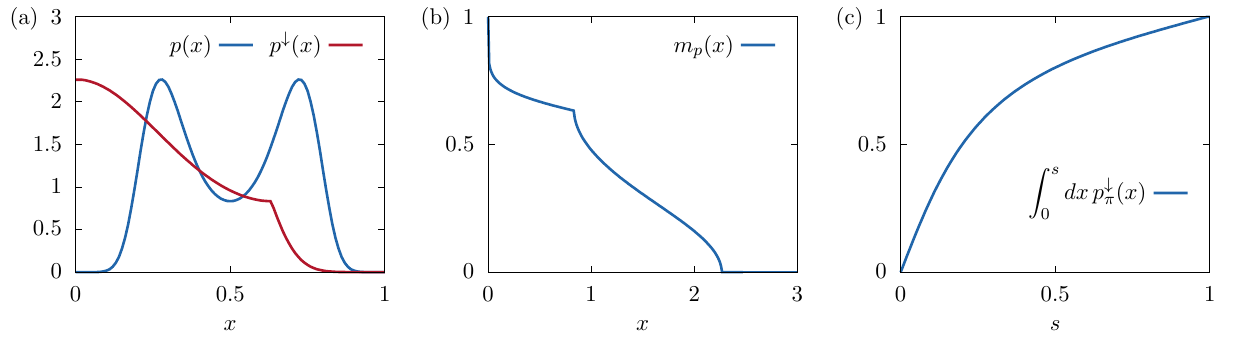}
\protect\caption{Illustration of the continuous-state thermomajorization theory. (a) Probability distribution $p(x)$ and its non-increasing rearrangement $p^\downarrow(x)$. (b) The level function $m_p(x)$. (c) The Lorenz curve of the pair $(p,\pi)$. Here, $p(x)=\exp[-400((x-0.5)^4-0.1(x-0.5)^2)]/Z_p$ and $\pi(x)=\exp[-20(x-0.5)^2]/Z_\pi$ are defined over $x\in[0,1]$, with $Z_p$ and $Z_\pi$ being the normalization constants.}\label{fig:ContThermo}
\end{figure}

\subsubsection{Equivalent conditions}
The thermomajorization order $p'\prec_\pi p$ is equivalent to each of the following conditions:
\begin{itemize}
	\item[($\mathrm{a}'$)] There exists a stochastic map $\mca{T}$ such that $\mca{T}p=p'$ and $\mca{T}\pi=\pi$.
	\item[($\mathrm{b}'$)] $\displaystyle\int_0^1\dd{x}\pi(x)f\qty[\dfrac{p'(x)}{\pi(x)}]\le\displaystyle\int_0^1\dd{x}\pi(x)f\qty[\dfrac{p(x)}{\pi(x)}]$ for any continuous, convex function $f$.
	\item[($\mathrm{c}'$)] $\displaystyle\int_0^1\dd{x}|p'(x)-z\pi(x)|\le\displaystyle\int_0^1\dd{x}|p(x)-z\pi(x)|~\forall z\in\mbb{R}$.
\end{itemize}
Here, a continuous stochastic map $\mca{T}$ is defined as
\begin{equation}
	(\mca{T}p)(x')\coloneqq\frac{d}{dx'}\int_0^1\dd{x}K(x',x)p(x),
\end{equation}
where $K(x',x)$ is a monotonically increasing function of $x'$ satisfying $K(0,x)=0$, $K(1,x)=1$, and $\int_0^1\dd{x}K(x',x)=x'$.

\section{Lemma on the existence of a transition matrix with a given eigenvector}
\begin{lemma}\label{lem:tran.mat.exist}
For any fixed inverse temperature $\beta$, energy levels $\{\epsilon_n\}$, and nonzero vector $\vecr_2$ orthogonal to $\vb*{1}$, there exists a configuration of barrier coefficients $\{b_{mn}\}$ such that the resulting transition matrix $\msf{W}$ satisfies the following conditions:
\begin{enumerate}
	\item $\msf{W}$ is nondegenerate, i.e., its eigenvalues are distinct.
	\item $\msf{W}$ fulfills the detailed balance condition, i.e., $w_{mn}e^{-\beta \epsilon_n}=w_{nm}e^{-\beta \epsilon_m}~\forall m,n$.
	\item $\vecr_2$ is the right eigenvector corresponding to the second largest eigenvalue of $\msf{W}$.
\end{enumerate}
\end{lemma}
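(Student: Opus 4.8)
The plan is to collapse the three requirements into a single spectral problem for a real symmetric matrix, and then to solve that problem by an explicit construction in which a large multiple of a strictly positive rank-one term enforces the sign constraints. Throughout, write $\Pi\coloneqq\diag(\vpi)$ with $\pi_n\propto e^{-\beta\epsilon_n}$ fixed and strictly positive, and set $\vecu_1\coloneqq\Pi^{1/2}\vb*{1}$ (which has all-positive entries) and $\vecu_2\coloneqq\Pi^{-1/2}\vecr_2$.

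First I would invoke the standard symmetrization of detailed-balance generators. For any $\msf{W}$ obeying detailed balance with respect to $\vpi$, the matrix $\msf{H}\coloneqq\Pi^{-1/2}\msf{W}\Pi^{1/2}$ is real symmetric, shares the spectrum of $\msf{W}$, and its eigenvectors $\vecu$ correspond to the right eigenvectors $\vecr=\Pi^{1/2}\vecu$ of $\msf{W}$. Under the barrier parametrization detailed balance is automatic, the off-diagonal entries satisfy $H_{mn}=H_{nm}\ge0$ and are in one-to-one correspondence with—and freely tunable through—the coefficients $b_{mn}$, while the diagonal is pinned by the conservation law $\msf{H}\vecu_1=\vb*{0}$ (equivalently $\vb*{1}^\top\msf{W}=\vb*{0}$, with $\vpi$ the stationary right eigenvector for eigenvalue $0$). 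The hypothesis $\vb*{1}^\top\vecr_2=0$ becomes $\vecu_1^\top\vecu_2=0$, consistent with orthogonality of eigenvectors of a symmetric matrix. Hence it suffices to construct a symmetric $\msf{H}$ with nonnegative off-diagonals, with $\vecu_1$ spanning its kernel, with distinct eigenvalues, and with $\vecu_2$ the eigenvector of its largest nonzero (hence second-largest) eigenvalue.

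Next I would exhibit such an $\msf{H}$ directly. Normalizing $\hat{\vecu}_1,\hat{\vecu}_2$ and completing to an orthonormal basis $\hat{\vecu}_3,\dots,\hat{\vecu}_d$ of $\{\vecu_1,\vecu_2\}^\perp$, set
\begin{equation*}
	\msf{H}=-\gamma(I-\hat{\vecu}_1\hat{\vecu}_1^\top)+(\gamma+\lambda_2)\hat{\vecu}_2\hat{\vecu}_2^\top+\sum_{k=3}^{d}(\gamma+\lambda_k)\hat{\vecu}_k\hat{\vecu}_k^\top,
\end{equation*}
which has $\vecu_1$ in its kernel and eigenvalues $0=\lambda_1$, a chosen $\lambda_2$, and $\lambda_3,\dots,\lambda_d$. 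Fixing constants $\delta>\eta>0$, I would take $\lambda_2=\delta-\gamma$ and pick $\lambda_3>\dots>\lambda_d$ distinct inside the window $(-\gamma-\eta,-\gamma+\eta)$. Since $\eta<\delta$ and $\gamma$ will be large, the ordering $0>\lambda_2>\lambda_3>\dots>\lambda_d$ holds with all eigenvalues distinct, and $\vecu_2$ is by construction the eigenvector of the second-largest eigenvalue $\lambda_2$.

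The main obstacle, and the only nontrivial point, is the off-diagonal sign constraint. For $m\neq n$ one has $(\hat{\vecu}_1)_m(\hat{\vecu}_1)_n=\sqrt{\pi_m\pi_n}$, so the leading term contributes $\gamma\sqrt{\pi_m\pi_n}\ge\gamma c$ with $c\coloneqq\min_{m\neq n}\sqrt{\pi_m\pi_n}>0$—strictly positive precisely because $\vecu_1=\Pi^{1/2}\vb*{1}$ has all-positive entries, which is exactly where detailed balance with respect to the strictly positive thermal state enters. The remaining contributions $\delta(\hat{\vecu}_2)_m(\hat{\vecu}_2)_n+\sum_{k\ge3}(\gamma+\lambda_k)(\hat{\vecu}_k)_m(\hat{\vecu}_k)_n$ are sign-indefinite but bounded in magnitude by $\delta+\eta$ uniformly in $\gamma$ (using $\sum_k(\hat{\vecu}_k)_m^2\le1$). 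Choosing $\gamma>(\delta+\eta)/c$ therefore makes every off-diagonal entry positive. Reading off $w_{mn}=(\Pi^{1/2}\msf{H}\Pi^{-1/2})_{mn}\ge0$ and the associated nonnegative barrier coefficients $b_{mn}$, together with $\msf{W}=\Pi^{1/2}\msf{H}\Pi^{-1/2}$, then yields a nondegenerate detailed-balance transition matrix having $\vecr_2=\Pi^{1/2}\vecu_2$ as the right eigenvector of its second-largest eigenvalue, which establishes all three claims.
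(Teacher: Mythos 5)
Your proposal is correct and takes essentially the same route as the paper: both symmetrize via $\Pi^{\pm 1/2}$, take as the dominant piece the maximally degenerate matrix proportional to $-(I-\hat{\vecu}_1\hat{\vecu}_1^\top)$ with $\hat{\vecu}_1=\Pi^{1/2}\vb*{1}$ (this is exactly the paper's reference matrix obtained from $b_{mn}^{\rm ref}=\epsilon_m+\epsilon_n$, which has eigenvalue $-Z$ with multiplicity $d-1$), exploit the degeneracy to place $\Pi^{-1/2}\vecr_2$ in the eigenbasis, and then split the spectrum while keeping the off-diagonal entries positive thanks to the $\sqrt{\pi_m\pi_n}$ contribution of the rank-one term. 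Your large-$\gamma$ formulation with the explicit threshold $\gamma>(\delta+\eta)/c$ is merely a rescaling of the paper's small-$\Delta_n$ perturbation and makes the positivity estimate somewhat more quantitative, but it is not a different argument.
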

\begin{proof}
We define a reference transition matrix $\msf{W}^{\rm ref}$, determined by setting $b_{mn}^{\rm ref}=\epsilon_m+\epsilon_n$.
It can be easily confirmed that the matrix $\msf{X}^{\rm ref}=\Pi^{-1/2}\msf{W}^{\rm ref}\Pi^{1/2}$ is symmetric; additionally, $\msf{X}^{\rm ref}$ has a single zero eigenvalue associated with the eigenvector $\vecu_1=\Pi^{-1/2}\vpi$, and the remaining eigenvalues are all $-Z$.
Here, $\Pi=\diag(\vpi)$, $\vpi=[\pi_1,\dots,\pi_d]^\top$, $\pi_n=e^{-\beta\epsilon_n}/Z$, and $Z=\sum_{n=1}^de^{-\beta\epsilon_n}$.
Specifically, the spectral decomposition of $\msf{X}^{\rm ref}$ can be expressed as
\begin{equation}
	\msf{X}^{\rm ref}=\sum_{n=2}^d(-Z)\vecu_n\vecu_n^\top.
\end{equation}
Here, the vectors $\{\vecu_n\}$ form an orthonormal basis, i.e., $\vecu_m^\top\vecu_n=\delta_{mn}$.
Due to the spectrum degeneracy of the matrix $\msf{X}^{\rm ref}$, one can always construct a basis such that $\vecu_2=\Pi^{-1/2}\vecr_2/\sqrt{\vecr_2^\top\Pi^{-1}\vecr_2}$ because $\vecu_1^\top\Pi^{-1/2}\vecr_2=\vpi^\top\Pi^{-1}\vecr_2=\vb*{1}^\top\vecr_2=0$.
Using such a basis and following the idea in Ref.~\cite{Klich.2019.PRX}, we slightly modify $\msf{X}^{\rm ref}$ to obtain a new matrix $\msf{X}$ that has the same eigenvectors but different eigenvalues,
\begin{equation}
	\msf{X}=\msf{X}^{\rm ref}+\sum_{n=2}^d\Delta_n\vecu_n\vecu_n^\top=\sum_{n=2}^d(-Z+\Delta_n)\vecu_n\vecu_n^\top.
\end{equation}
Here, $Z>\Delta_2>\dots>\Delta_d\ge 0$ are small numbers that ensure the positivity of ${x}_{mn}$ for $m\neq n$.
Now, consider the matrix $\msf{W}\coloneqq\Pi^{1/2}\msf{X}\Pi^{-1/2}$.
Evidently, all off-diagonal elements of $\msf{W}$ are positive, and 
\begin{align}
	\vb*{1}^\top\msf{W}&=\sum_{n=2}^d(-Z+\Delta_n)\vb*{1}^\top\Pi^{1/2}\vecu_n\vecu_n^\top\Pi^{-1/2}\notag\\
	&=\sum_{n=2}^d(-Z+\Delta_n)\vecu_1^\top\vecu_n\vecu_n^\top\Pi^{-1/2}\notag\\
	&=\vb*{0}.
\end{align}
Therefore, $\msf{W}$ is the transition matrix.

We can show that this matrix satisfies all the conditions (1), (2), and (3).
First, its eigenvalues are distinct, $0=\lambda_1>\lambda_2>\dots>\lambda_d$, where $\lambda_n=-Z+\Delta_n$ for any $n\ge 2$.
Second, $\msf{W}$ fulfills the detailed balance condition since $\msf{X}=\Pi^{-1/2}\msf{W}\Pi^{1/2}$ is a symmetric matrix.
Lastly, the right eigenvector of $\msf{W}$ that corresponds to the second largest eigenvalue $\lambda_2$ is $\Pi^{1/2}\vecu_2=\vecr_2/\sqrt{\vecr_2^\top\Pi^{-1}\vecr_2}\propto\vecr_2$.
\end{proof}

%